\theoremstyle{plain}
\newtheorem{prop}{Proposition}       
\newtheorem{thm}{Theorem}     
\newtheorem{fact}{Fact}
\newcommand{\bQ}{\mathbb{Q}}
\newcommand{\bC}{\mathbb{C}}
\numberwithin{equation}{section}
\title[Chebyshev-like polynomials]{Generating functions of\\ Chebyshev-like polynomials}
\author[A. Bostan]{Alin Bostan}
\author[B. Salvy]{Bruno Salvy}
\address{Algorithms Project, Inria Rocquencourt, France}
\thanks{Work of the first two authors was supported in part by the Microsoft Research-Inria Joint Centre.}
\email{firstname.lastname@inria.fr}
\author[K. Tran]{Khang Tran}
\address{Department of Mathematics, University of Illinois at Urbana-Champaign, Urbana, Illinois, USA}
\thanks{The last author was supported by National Science Foundation
grant DMS 08-38434 EMSW21-MCTP: Research Experience for Graduate Students.}
\email{khangdtran@gmail.com}
\subjclass[2000]{11C08}
\begin{document}
%%%%%%%%%%%%%%%%%%%%%%%%%%%%%%%%%%%%%%%%%%%%%%%%%%%%%%%%%%%%%%%%%%%%%%%%%%%%%
\begin{abstract}
In this short note, we give simple proofs of several results and conjectures formulated by Stolarsky and Tran concerning generating functions of some families of Chebyshev-like polynomials.
\end{abstract}	
\maketitle

%\keywords{Chebyshev polynomials; generating series; recurrences; algorithms.}

% \ccode{Mathematics Subject Classification 2000: 11C08, 33D45}
%%%%%%%%%%%%%%%%%%%%%%%%%%%%%%%%%%%%%%%%%%%%%%%%%%%%%%%%%%%%%%%%%%%%%%%%%%%%%
\section{Introduction}
Stolarsky~\cite{Stolarsky2002} observed that the discriminant of the product of the polynomials 
\[K_{s}(x,y)=(1+y)^{2s}+xy^{s}\quad\text{and}\quad
f_{m}(y)=(y^{2m+1}-1)/(y-1)\]
has a very nice factorization:
\begin{equation}\label{firstconj}
\Delta_{y}(K_{s}f_{m})=C_{m}^{(s)}x^{2s-1}(x+2^{2s})H_{m}^{(s)}(x)^{4},
\end{equation}
with $C_{m}^{(s)}$ a rational constant and $H_m^{(s)}(x)$ the following polynomial in~${\mathbb{Q}}[x]$:
\begin{equation}\label{defH}
H_{m}^{(s)}(x)=\prod_{k=1}^{m}\left(x+4^{s}\cos^{2s}\left(\frac{k\pi}{2m+1}\right)\right).
\end{equation}
This was obtained empirically for specific values of~$m$ and~$s$. Moreover, Stolarsky conjectured that the generating functions of the polynomials~$H_m^{(s)}$ for fixed positive integer $s$ are rational, with explicit empirical formulas for $s=1,2,3$.
In~\cite{Tran2009}, Tran proved Eq.~\eqref{firstconj} with the explicit value
$C_{m}^{(s)}=(-1)^{m}(2m+1)^{2m-1}s^{2s}$
and obtained the generating function of the sequences of polynomials $H_m^{(1)}(x)$ and $H_m^{(2)}(x)$. 

Indeed, for $s=1$, the polynomials $H_{m}^{(s)}(x)$ are related to the classical Chebyshev polynomials of the second kind $U_n(x)$. From the definition
\[ U_n(x) = 2^n \cdot \prod_{k=1}^n \left( x - \cos \frac{k \pi}{n+1}\right),\]
it follows that $U_{2m}(x) = (-1)^m \cdot H_m^{(1)}(-4x^2)$.
From there, a simple derivation gives the generating function of the polynomials~$H_m^{(1)}(x)$.
\begin{prop} \label{prop:2}  
  The generating function of the sequence $H_{m}^{(1)}(x)$ is rational:
\[
\sum_{m \geq 0} H_{m}^{(1)}(x) t^m = \frac{1-t}{(1-t)^2 - xt}.
\]
\end{prop}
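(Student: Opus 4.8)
The plan is to reduce everything to the classical generating function of the Chebyshev polynomials of the second kind,
\[
\sum_{n\geq 0} U_n(x)\,t^n = \frac{1}{1-2xt+t^2},
\]
and then to extract the subsequence of even-indexed terms, which is exactly where the polynomials $H_m^{(1)}$ live by the relation $U_{2m}(x)=(-1)^m H_m^{(1)}(-4x^2)$ recalled above.

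First I would isolate the even part in~$t$: substituting $t\mapsto -t$ and averaging gives
\[
\sum_{m\geq 0} U_{2m}(x)\,t^{2m}
 = \frac12\left(\frac{1}{1-2xt+t^2}+\frac{1}{1+2xt+t^2}\right)
 = \frac{1+t^2}{(1+t^2)^2-4x^2t^2},
\]
after putting the two fractions over the common denominator $(1+t^2)^2-(2xt)^2$. Since only even powers of~$t$ appear, I may legitimately substitute $t^2\mapsto -u$ at the level of formal power series in~$u$, which introduces the sign $(-1)^m$:
\[
\sum_{m\geq 0} (-1)^m U_{2m}(x)\,u^m
 = \frac{1-u}{(1-u)^2+4x^2u}.
\]

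Next I would invoke $U_{2m}(x)=(-1)^m H_m^{(1)}(-4x^2)$, so that the left-hand side becomes $\sum_{m\geq 0} H_m^{(1)}(-4x^2)\,u^m$, while the right-hand side is $\frac{1-u}{(1-u)^2-(-4x^2)u}$. Replacing the argument $-4x^2$ by a fresh indeterminate is harmless: for each fixed~$m$, the two sides of the resulting identity are polynomials in that argument with coefficients rational in~$u$, and they agree for the infinitely many values taken by $x\mapsto -4x^2$, hence coincide. Renaming $-4x^2$ as~$x$ and $u$ as~$t$ then yields exactly $\sum_{m\geq 0} H_m^{(1)}(x)\,t^m=\frac{1-t}{(1-t)^2-xt}$.

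I do not expect any real obstacle here: the proof is essentially a bisection of a known generating function followed by a change of sign and a trivial change of variable. The only two points deserving a line of justification are the validity of the bisection/substitution at the formal power series level (handled by the $t\mapsto -t$ averaging trick) and the passage from values at $-4x^2$ to a formal identity in a single variable; everything else is the quoted relation between $H_m^{(1)}$ and $U_{2m}$ together with the textbook generating function for the $U_n$.
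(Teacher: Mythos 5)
Your proposal is correct and follows essentially the same route as the paper's own proof: bisection of the classical generating function $\sum_n U_n(x)t^n = 1/(1-2xt+t^2)$ via the $t\mapsto -t$ averaging, the substitution $t^2\mapsto -u$, and the relation $U_{2m}(x)=(-1)^m H_m^{(1)}(-4x^2)$. The only difference is that you spell out the final change of variable $-4x^2\rightsquigarrow x$ (via agreement at infinitely many values) where the paper leaves it implicit; the computations and intermediate identities match exactly.
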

\begin{proof}
Starting from the classical rational generating function
\begin{equation} \label{eq:1}
	\sum_{n \geq 0} U_n(x) t^n = \frac{1}{1-2xt + t^2}, 
\end{equation}
a generating function for the even part is readily obtained:
\[ \sum_{m \geq 0} U_{2m}(x) t^{2m} = \frac12 \left(\frac{1}{1-2xt+t^2} + \frac{1}{1+2xt+t^2} \right) = \frac{1+t^2}{(1+t^2)^2-4x^2t^2},\] and thus we deduce
\[ \sum_{m \geq 0} H_{m}^{(1)}(-4x^2) t^m = \sum_{m \geq 0} U_{2m}(x) (-t)^{m} = \frac{1-t}{(1-t)^2+4x^2t},\]
from which the conclusion follows. \end{proof}
In~\cite{Tran2009}, Proposition~\ref{prop:2} was proved in a different way, and the  generating function for the case~$s=2$ was shown to be rational. We generalize these results by proving:
\begin{thm} \label{prop:3}
  For any $s \geq 1$, the generating function $F_s(x,t)=\sum_{m \geq 0} H_{m}^{(s)}(x) t^m$ belongs to~$\bQ(x,t)$.
Its denominator has degree at most $2^s$ in $t$, and its numerator has degree at most $2^s-1$ in $t$.
\end{thm}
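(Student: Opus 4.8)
The plan is to reduce the family $H_{m}^{(s)}$ to the even-indexed Chebyshev polynomials $U_{2m}$, just as in the case $s=1$, and then to build $F_{s}$ from an $s$-fold Hadamard product.

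First I would rewrite the product in~\eqref{defH}. Factoring the univariate polynomial $w^{s}+x=\prod_{\nu^{s}=-x}(w-\nu)$ at $w=4\cos^{2}\theta$ gives $x+4^{s}\cos^{2s}\theta=\prod_{\nu^{s}=-x}\bigl(4\cos^{2}\theta-\nu\bigr)$, whence, interchanging the two products,
\[ H_{m}^{(s)}(x)=\prod_{\nu^{s}=-x}\ \prod_{k=1}^{m}\Bigl(4\cos^{2}\tfrac{k\pi}{2m+1}-\nu\Bigr). \]
Pairing the index $k$ with $2m+1-k$ in the factorization $U_{2m}(z)=2^{2m}\prod_{k=1}^{2m}\bigl(z-\cos\tfrac{k\pi}{2m+1}\bigr)$ recalled in the Introduction turns it into $U_{2m}(z)=\prod_{k=1}^{m}\bigl(4z^{2}-4\cos^{2}\tfrac{k\pi}{2m+1}\bigr)$; since $U_{2m}$ is even we may write $U_{2m}(z)=V_{m}(z^{2})$ with $V_{m}\in\bQ[w]$ of degree $m$, and then $\prod_{k=1}^{m}\bigl(4\cos^{2}\tfrac{k\pi}{2m+1}-\nu\bigr)=(-1)^{m}V_{m}(\nu/4)$. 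Substituting back,
\[ H_{m}^{(s)}(x)=(-1)^{ms}\prod_{\nu^{s}=-x}V_{m}(\nu/4), \]
an identity in $\bQ[x]$ which for $s=1$ is exactly the relation $U_{2m}(x)=(-1)^{m}H_{m}^{(1)}(-4x^{2})$ of the Introduction.

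Next I would record the generating function of the $V_{m}$. Extracting the even part of~\eqref{eq:1} as in the proof of Proposition~\ref{prop:2} gives $\sum_{m\ge0}V_{m}(z^{2})t^{2m}=(1+t^{2})/\bigl((1+t^{2})^{2}-4z^{2}t^{2}\bigr)$, so with $w=z^{2}$ and $u=t^{2}$,
\[ \sum_{m\ge0}V_{m}(w)\,u^{m}=\frac{1+u}{(1+u)^{2}-4wu}, \]
a proper rational function of $u$ with denominator of degree~$2$. Now let $\nu_{1},\dots,\nu_{s}$ be the roots of $\lambda^{s}+x$, i.e. the $\nu$ with $\nu^{s}=-x$ occurring above. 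Each sequence $\bigl(V_{m}(\nu_{l}/4)\bigr)_{m}$ satisfies a constant-coefficient linear recurrence of order~$2$, so the termwise product $a_{m}:=\prod_{l=1}^{s}V_{m}(\nu_{l}/4)$ satisfies one of order at most $2^{s}$; equivalently, $\sum_{m}a_{m}u^{m}$ is the $s$-fold Hadamard product in $m$ of the $\sum_{m}V_{m}(\nu_{l}/4)u^{m}$, hence rational in $u$ with denominator of degree $\le 2^{s}$ and, being a Hadamard product of proper rational functions, numerator of degree $\le 2^{s}-1$. Being symmetric in $\nu_{1},\dots,\nu_{s}$, it is a rational function of the coefficients of $\lambda^{s}+x$, hence an element of $\bQ(x,u)$. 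Finally $F_{s}(x,t)=\sum_{m}H_{m}^{(s)}(x)t^{m}=\sum_{m}(-1)^{ms}a_{m}t^{m}$ is obtained from it by the substitution $u=(-1)^{s}t$, which preserves rationality over $\bQ(x)$ together with the two degrees, and this gives the theorem.

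The step I expect to be the main obstacle is the first one: one must check carefully that the two products may be interchanged, that the pairing $k\leftrightarrow 2m+1-k$ really produces a polynomial in $z^{2}$ (so that $V_{m}$ is well defined), and how the powers of $-1$ combine to $(-1)^{ms}$. Once the reduction $H_{m}^{(s)}(x)=(-1)^{ms}\prod_{\nu^{s}=-x}V_{m}(\nu/4)$ is in place, the remaining ingredients — the even-part extraction, the fact that a product of $s$ sequences each satisfying a linear recurrence of order~$2$ satisfies one of order at most $2^{s}$ (equivalently, the degree bound for Hadamard products of rational functions), and the descent to $\bQ(x,t)$ by symmetry in the $\nu_{l}$ — are standard. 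As a cross-check, for $s=1$ the recipe reproduces the generating function $\tfrac{1-t}{(1-t)^{2}-xt}$ of Proposition~\ref{prop:2}.
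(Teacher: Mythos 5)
Your proof is correct and follows essentially the same route as the paper: since $G_m^{(1)}(x)=V_m(x/4)$, your identity $H_m^{(s)}(x)=(-1)^{ms}\prod_{\nu^s=-x}V_m(\nu/4)$ is exactly the paper's factorization $G_m^{(s)}(x^s)=(-1)^{m(s-1)}\prod_{j}G_m^{(1)}(\varepsilon_s^j x)$, merely parametrized by the roots of $\lambda^s+x$ instead of by roots of unity after the substitution $x\mapsto x^s$. The remaining steps — the $s$-fold Hadamard product of order-$2$ C-finite sequences giving denominator degree $\le 2^s$ and a proper numerator, and the descent to $\bQ(x,t)$ by symmetry in the conjugates (the paper's power-sum invariance under $x\mapsto\varepsilon_s x$) — coincide with the paper's argument.
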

Moreover, we give an algorithm that computes these rational functions explicitly.                                             
In Section~\ref{sec-general}, we prove this theorem. 
In Section~\ref{sec:algorithm}, we comment on the computational aspects.
We conclude with a few further observations in Section~\ref{sec-final}.

%%%%%%%%%%%%%%%%%%%%%%%%%%%%%%%%%%%%%%%%%%%%%%%%%%%%%%%%%%%%%%%%%%%%%%%%%%%%%
\section{Rational Generating Series}\label{sec-general}
For convenience, we work with the polynomial $G_m^{(s)}(x) = (-1)^m  H_m^{(s)}(-x)$. It is monic, with roots the $s$-th powers of the roots of $G_m^{(1)}(x) = (-1)^m  H_m^{(1)}(-x)$.  To prove the theorem, it is clearly sufficient to show that the generating function $\sum_{m \geq 0} G_{m}^{(s)}(x) t^m$ is rational.            

%%%%%%%%%%%%%%%%%%%%%%%%%%%%%%%%%%%%%%%%%%%%%%%%%%%%%%%%%%%%%%%%%%%%%%%%%%%%%
\subsection{Roots}
Letting $\varepsilon_s$ be a primitive $s$-th root of unity, we have \[G_m^{(s)}(x^s) = \prod_{G_m^{(1)}(\alpha)=0} (x^s - \alpha^s) = 
\prod_{G_m^{(1)}(\alpha)=0} \left( \prod_{j=0}^{s-1} \bigl( x - \alpha/\varepsilon_s^j \bigr) \right),
\]        
which equals
\[\prod_{j=0}^{s-1} \left( \prod_{G_m^{(1)}(\alpha)=0}  \bigl( x  - \alpha/\varepsilon_s^j \bigr)  \right)
=
\prod_{j=0}^{s-1} \varepsilon_s^{-jm}\cdot\prod_{j=0}^{s-1} G_m^{(1)}(\varepsilon_s^j x) 
=     
(-1)^{m(s-1)}
\cdot 
\prod_{j=0}^{s-1} G_m^{(1)}(\varepsilon_s^j x).
\]   

\noindent Therefore, the polynomial $G_m^{(s)}$ can be expressed using solely~$G_m^{(1)}$ by 
\begin{equation}\label{eq:Gms}
	G_m^{(s)}(x^s) = (-1)^{m(s-1)}
\cdot \prod_{j=0}^{s-1} G_m^{(1)}(\varepsilon_s^j x),
\end{equation}
and in particular it is given by the resultant 
%\begin{equation}\label{eq:Gms1}
$	G_m^{(s)}(x) = \operatorname{Res}_y \bigl( G_m^{(1)}(y), x-y^s \bigr).$
%\end{equation}
%%%%%%%%%%%%%%%%%%%%%%%%%%%%%%%%%%%%%%%%%%%%%%%%%%%%%%%%%%%%%%%%%%%%%%%%%%%%%
\subsection{Polynomials, Recurrences and Hadamard Products}
A consequence of Proposition~\ref{prop:2} is that the sequence of polynomials~$G_m^{(1)}(x)$ satisfies the linear recurrence 
\begin{equation}\label{recGm} 
G_{m+2}^{(1)}(x)+(2-x) G_{m+1}^{(1)}(x)+ G_{m}^{(1)}(x)=0, \quad G_0(x) = 1, \; G_1(x) = x-1.
\end{equation}
The coefficients of this recurrence are \emph{constant}, in the sense that they do not depend on the index~$m$. The characteristic polynomial of the sequence~$G_m^{(1)}(x)$ is the reciprocal $t^2+(2-x)t+1$ of the denominator of its generating function.  

Conversely, any sequence satisfying a linear recurrence with constant coefficients admits a rational generating function and this applies in particular to the sequence of polynomials $G_m^{(1)}(\varepsilon_s^j x)$ for any $j \geq 0$.

The product of sequences~$u_n$ and~$v_n$ that are solutions of linear recurrences with constant coefficients satisfies a linear recurrence with constant coefficients again (see, e.g., \cite[\S2.4]{Poorten1989}, \cite[Prop.~4.2.5]{Stanley1986}, \cite[\S2, Ex.~5]{CerliencoMignottePiras1987}). Its generating function, called the \emph{Hadamard product} of those of~$u_n$ and~$v_n$, is therefore rational. 
{}From Eq.~\eqref{eq:Gms}, it follows that the generating function $\sum_{m \geq 0} G_m^{(s)}(x^s) t^m$ is also rational (\emph{a priori} belonging to $\bC(x,t)$). 

Moreover, the reciprocal of the denominator of a Hadamard product of rational series has for roots the pairwise products of those of the individual series. 
Thus, letting $\alpha_1(x)$, %$ = -1+x/2+ \sqrt{-x+x^2/4}$ 
$\alpha_2(x)$ %$ = -1+x/2- \sqrt{-x+x^2/4}$ 
denote the roots of $(1+t)^2-xt$, the reciprocal of the denominator of the generating function of $(-1)^{m(s-1)} \cdot G_m^{(s)}(x^s)$ is the characteristic polynomial
\[P_s(x,t) =  \prod_{1\leq i_1,\ldots,i_s \leq 2} \Big(t - \alpha_{i_1}(x) \alpha_{i_2}(\varepsilon_s x) \cdots \alpha_{i_s}(\varepsilon_s^{s-1} x) \Big). \]               
Note that, since $\alpha_1 \cdot \alpha_2 = 1$, the polynomial $P_s$ is self-reciprocal with respect to~$t$.

We prove that the polynomial $P_s(x,t)$ belongs to $\mathbb{Q}[x^s,t]$ by showing that all the (Newton) powersums of the roots of $P_s(x,t)$ belong to $\mathbb{Q}[x^s]$. For any $\ell \in \mathbb{N}$, the $\ell$-th powersum 
%of $P_s$ 
is equal to the product 
\begin{equation}\label{Tell}
\left(\alpha_1^{\ell}(x) + \alpha_2^{\ell}(x)\right)
\left(\alpha_1^{\ell}(\varepsilon_s x) + \alpha_2^{\ell}(\varepsilon_s x)\right)
\cdots \left(\alpha_1^{\ell}(\varepsilon_s^{s-1} x) + \alpha_2^{\ell}(\varepsilon_s^{s-1} x)\right),
\end{equation}
and thus has the form $T_{\ell,s}(x) = Q_{\ell}(x) Q_{\ell}(\varepsilon_s x) \cdots Q_{\ell}(\varepsilon_s^{s-1} x)$ for some %monic 
polynomial $Q_{\ell}(x) \in \mathbb{Q}[x]$. The 
polynomial $T_{\ell,s}(x)$ being left unchanged under replacing $x$ by $\varepsilon_s x$, it belongs to $\mathbb{Q}[x^s]$,
and so do all the coefficients of $P_s$. This can also be seen from the expression of~$T_{\ell,s}$ as a resultant: $T_{\ell,s}(x)=\operatorname{Res}_y(y^s-x^s,Q_\ell(y))$.

In conclusion, $G_m^{(s)}(x^s)$ satisfies a recurrence with coefficients that are polynomials in $\mathbb{Q}[x^s]$, thus the series $\sum_{m \geq 0} G_m^{(s)}(x^s) t^m$ is rational and belongs to $\mathbb{Q}(x^s,t)$, and thus $F_s(x,t)$ belongs to $\mathbb{Q}(x,t)$. The assertion on the degree in $t$ of (the denominator and numerator of) $F_s(x,t)$ follows from the form of $P_s(x,t)$.      This concludes the proof of Theorem~\ref{prop:3}.

%%%%%%%%%%%%%%%%%%%%%%%%%%%%%%%%%%%%%%%%%%%%%%%%%%%%%%%%%%%%%%%%%%%%%%%%%%%%%
\section{Algorithm}   \label{sec:algorithm}     
The proof of Theorem~\ref{prop:3} is actually effective, and therefore it can be used to generate, for specific values of $s$, the corresponding rational function $F_s$, in a systematic and unified way. 

%%%%%%%%%%%%%%%%%%%%%%%%%%%%%%%%%%%%%%%%%%%%%%%%%%%%%%%%%%%%%%%%%%%%%%%%%%%%%
\subsection{New Proof for the Case $s=2$}\label{sec-s=2}
We illustrate these ideas in the simplest case $s=2$, for which the computations can be done by hand.                

We compute the generating function of $k_m(x):=H_m^{(2)}(-x^2)=G_m^{(1)}(x)  G_m^{(1)}(-x)$ from which that of the sequence~$H_m^{(2)}(x)$ is easily deduced. By Proposition~\ref{prop:2}, the generating functions of the sequences $G_m^{(1)}(x)$ and $G_m^{(1)}(-x)$ have for denominators
\[g_1(t):=(1+t)^2-xt,\qquad g_2(t):=(1+t)^2+xt,\]
which are self-reciprocal. The polynomial whose roots are the products of the roots of~$g_1$ and~$g_2$ is obtained by a resultant computation:
\[\operatorname{Res}_u(g_1(u),u^2g_2(t/u))=(t-1)^4+x^2t(1+t)^2=1+(x^2-4)t+(2x^2+6)t^2+(x^2-4)t^3+t^4.\]

It follows that the sequence $k_m(x)$ satisfies the fourth order recurrence
\[ k_{m+4}+(x^2-4) k_{m+3}+(2x^2+6)k_{m+2}+(x^2-4)k_{m+1}+k_{m}=0.\] 
The initial conditions can be determined separately 
\[k_0 = 1, \; k_1
= 1-x^2, \; k_2 = 1-7x^2+x^4, \; k_3 = 1-26x^2+13x^4-x^6\]      
and used to compute the numerator of the generating function
of the sequence of polynomials $H_m^{(2)}(x)$.
In conclusion, we have just proven the identity 
\begin{equation}\label{gf2}
\sum_{m=0}^\infty{H_m^{(2)}(x)t^m}=\frac{(1-t)^3}{(t-1)^4-xt(t+1)^2}.
\end{equation}
This provides a simple alternative proof of~\cite[Prop.~4.1]{Tran2009}.

%%%%%%%%%%%%%%%%%%%%%%%%%%%%%%%%%%%%%%%%%%%%%%%%%%%%%%%%%%%%%%%%%%%%%%%%%%%%%
\subsection{General Algorithm}
The computation of Hadamard products extends to linear recurrences that can even have \emph{polynomial} coefficients. It is implemented in the Maple package {\tt gfun}~\cite{SalvyZimmermann1994}. This makes it possible to write a first implementation that produces the generating function of the~$H_m^{(s)}(x)$ for small~$s$. 

The computation can be made more efficient by using recent algorithms specific to linear recurrences with constant coefficients~\cite{BostanFlajoletSalvySchost2006} (or, equivalently, to rational functions). The central idea is to avoid the use of a general-purpose algorithm for the computation of bivariate resultants and recover the denominator from the (Newton) powersums of its roots, themselves obtained from the powersums of~$\alpha_1$ and~$\alpha_2$. Using very simple univariate resultants, computations in algebraic extensions can be avoided. 
This is summarized in the following algorithm that follows the steps of the constructive proof of Thm.~\ref{prop:3}, using efficient algorithmic tools.

\medskip
\noindent {\bf Input}: an integer $s\geq 2$.\\
\noindent {\bf Output}: the rational function $F_s(x,t)$.

\medskip
\begin{enumerate}
	\item Compute the powersums~$Q_\ell(x):=\alpha_1(x)^\ell+\alpha_2(x)^\ell\in\mathbb{Q}[x]$ for~$\ell=0,\dots,2^s$;
     	\item Compute the powersums~$T_{\ell,s}$ from Eq.~\eqref{Tell} by  	$T_{\ell,s}(x^{1/s})=\operatorname{Res}_y(y^s-x,Q_\ell(y))$ for~$\ell=0,\dots,2^s$;	
	\item Recover the polynomial $P_s(x^{1/s},t)$ from its powersums~$T_{\ell,s}(x^{1/s},t)$;
	\item Deduce the denominator $D_s(x,t)$ of $F_s(x,t)$  using 
	%the equality \\ 
	$P_s(x^{1/s},t) = D_s(-x,(-1)^s t)$;
	\item Compute $G_m^{(1)}(x)$ for~$m=0,\dots,2^s-1$ using the 2nd order recurrence~\eqref{recGm};
	\item Compute $G_m^{(s)}(x)=\operatorname{Res}_y(G_m^{(1)}(y),x-y^s)$ for~$m=0,\dots,2^s-1$;
	\item Compute the numerator $N_s(x,t):=D_s(x,t)\times\sum_{m=0}^{2^s-1}G_m^{(s)}(-x) (-t)^m \!\! \mod t^{2^s}$;
	\item {\bf Return} the rational function $F_s(x,t) = N_s(x,t)/D_s(x,t).$
\end{enumerate}
Steps~(1) and~(3) can be performed efficiently using the algorithms in~\cite{BostanFlajoletSalvySchost2006}.  
In Step~(4) we use the fact that $P_s(x,t)$ is the denominator of the generating series \[\sum_{m \geq 0} (-1)^{m(s-1)} \cdot G_m^{(s)}(x^s)t^m = \sum_{m \geq 0} H_m^{(s)}(-x^s) \bigl( (-1)^s t \bigr)^m = F_s(-x^s,(-1)^s t).\]
We give the complete (and remarkably short) Maple code in an Appendix.

%%%%%%%%%%%%%%%%%%%%%%%%%%%%%%%%%%%%%%%%%%%%%%%%%%%%%%%%%%%%%%%%%%%%%%%%%%%%%
\subsection{Special Cases}
For~$s=2$, the computation recovers~\eqref{gf2}.
For~$s=3$, it takes less than one hundredth of a second of computation on a personal laptop to prove the following result
\[  \sum_{m \geq 0} H_{m}^{(3)}(x) t^m  = \frac{(1-t) \left( (t-1)^6- xt^2 (t+3)(3t+1) \right)}{x^2 t^4-xt(t^4+14t^3+34t^2+14t+1)(t
-1)^2+(t-1)^8}, \]
that was conjectured by Stolarsky in~\cite{Stolarsky2002}.
 
Similarly, less than two hundredth of a second of computation is enough to discover and prove the following new result.
\begin{prop}
 The generating function $\sum_{m \geq 0} H_{m}^{(4)}(x) t^m$ is equal to
\[ \frac{(t-1) \left( x^2 t^4 A(t) - 2 x t^2 \, B(t) \, (t-1)^6 + (t-1)^{14} \right)}{x^3 t^5 (t+1)^2 \, (t-1)^4 + x^2 t^3  \, C(t) + x t  (t-1)^8 \, D(t)\, - (t-1)^{16}},    
 \]
where        
\begin{small}
\begin{alignat*}1 
A(t) & =  9t^6-46t^5-89t^4-260t^3-89t^2-46t+9, \\
B(t) & =  11t^4+128t^3+266t^2+128t+11,            \\            
C(t) & =         2t^{10}-13t^9+226t^8-300t^7-676t^6-2574t^5-676t^4-300t^3+226t^2-13t+2,\\
D(t) & =  t^6+60t^5+519t^4+1016t^3+519t^2+60t+1.
\end{alignat*} 
\end{small}
\end{prop}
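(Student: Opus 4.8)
The plan is to specialize the constructive algorithm of Section~\ref{sec:algorithm} to $s=4$, just as Section~\ref{sec-s=2} does by hand for $s=2$ and as was carried out by machine for $s=3$. The existence and rationality of $F_4(x,t)$, together with the bounds $16$ and $15$ on the $t$-degrees of the denominator and numerator respectively (here $2^s=16$), are already guaranteed by Theorem~\ref{prop:3}; what remains is to pin these two polynomials down explicitly. First I would compute the power sums $Q_\ell(x)=\alpha_1(x)^\ell+\alpha_2(x)^\ell$ of the two roots of $(1+t)^2-xt$: they satisfy $Q_0=2$, $Q_1=x-2$ and $Q_{\ell+2}=(x-2)\,Q_{\ell+1}-Q_\ell$, which produces $Q_0,\dots,Q_{16}\in\bQ[x]$ in one pass.

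Next, writing $\varepsilon_4$ for a primitive fourth root of unity, the product~\eqref{Tell} becomes $T_{\ell,4}(x)=Q_\ell(x)\,Q_\ell(\varepsilon_4 x)\,Q_\ell(\varepsilon_4^2 x)\,Q_\ell(\varepsilon_4^3 x)$, which I would evaluate through the univariate resultant $T_{\ell,4}(x^{1/4})=\operatorname{Res}_y\bigl(y^4-x,\,Q_\ell(y)\bigr)$ so as to stay inside $\bQ[x]$; by the argument of Section~\ref{sec-general}, the $T_{\ell,4}$ lie in $\bQ[x^4]$. From $T_{0,4},\dots,T_{16,4}$ one reconstructs the self-reciprocal characteristic polynomial $P_4(x^{1/4},t)$, of degree $16$ in $t$, either by Newton's identities or, more efficiently, by the algorithm of~\cite{BostanFlajoletSalvySchost2006}; since $(-1)^s t=t$ for $s=4$, the denominator is then obtained from $P_4(x^{1/4},t)$ by the substitution $x\mapsto-x$, yielding $D_4(x,t)$.

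For the numerator I would iterate the order-$2$ recurrence~\eqref{recGm} to obtain $G_0^{(1)},\dots,G_{15}^{(1)}$, set $G_m^{(4)}(x)=\operatorname{Res}_y\bigl(G_m^{(1)}(y),\,x-y^4\bigr)$ for $0\le m\le 15$, and form $N_4(x,t)=D_4(x,t)\cdot\sum_{m=0}^{15}G_m^{(4)}(-x)\,(-t)^m\bmod t^{16}$, using $H_m^{(4)}(x)=(-1)^m G_m^{(4)}(-x)$. Putting $F_4=N_4/D_4$, cancelling the common factor $t-1$ from numerator and denominator, and grouping the result by powers of $x$ into the polynomials $A,B,C,D$ gives the asserted closed form.

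The whole computation is mechanical — the short Maple code of the Appendix carries it out in a fraction of a second — so the only genuine difficulty is bookkeeping, the intermediate objects being bivariate polynomials of degree up to $16$ in each variable. If one prefers a certificate independent of the computer algebra run, the cleanest option is \emph{a posteriori} verification: expand the claimed rational function as a power series in $t$ and check that its coefficients coincide with $H_m^{(4)}(x)$, computed from~\eqref{defH} (or from the resultant formula for $G_m^{(4)}$), for $m=0,\dots,31$; because Theorem~\ref{prop:3} caps the denominator degree at $16$ and the numerator degree at $15$, this finite comparison determines the rational function uniquely.
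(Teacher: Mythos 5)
Your proposal follows exactly the paper's route: the paper proves this proposition simply by running the general algorithm of Section~\ref{sec:algorithm} (justified by the constructive proof of Theorem~\ref{prop:3}) for $s=4$, which is precisely the computation you describe step by step, and your \emph{a posteriori} Pad\'e-style check of the first $32$ coefficients is a valid additional certificate. One small slip: no common factor $t-1$ is cancelled at the end --- the displayed denominator does not vanish at $t=1$ (it equals $-4096\,x^{2}$ there) and the degrees $16$ and $15$ already saturate the bounds of Theorem~\ref{prop:3}, so the $(t-1)$ in the numerator is merely factored out for display.
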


The code given in the appendix makes it possible to compute all the rational functions $F_s(x,t)$ for $1\leq s\leq 7$ in less than a minute.   

%%%%%%%%%%%%%%%%%%%%%%%%%%%%%%%%%%%%%%%%%%%%%%%%%%%%%%%%%%%%%%%%%%%%%%%%%%%%%	            
\section{Further Remarks}\label{sec-final} 
While computing with the polynomials $H_m^{(s)}$ we experimentally discovered the following amusing facts.

\medskip
\begin{fact}\label{fact:0}
For all $s \geq 1$, we have:
\begin{alignat*}1 
H_0^{(s)}(x) & =  1, \\
H_1^{(s)}(x) & =  x+1, \\
H_2^{(s)}(x) & =  x^2+ L_{2s} \, x + 1, 
\end{alignat*}                    
where $L_{n}$ denotes the $n$-th element of the Lucas sequence defined by $L_0=2,L_1=1$ and $L_n = L_{n-1} + L_{n-2}$ for all $n \geq 2$.
\end{fact}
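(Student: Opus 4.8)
The plan is to read off the small cases directly from the defining product~\eqref{defH}, or equivalently from the recurrence for~$G_m^{(s)}$, rather than from the generating function. Recall that $H_m^{(s)}(x)=\prod_{k=1}^m\bigl(x+4^s\cos^{2s}(k\pi/(2m+1))\bigr)$, so $H_0^{(s)}=1$ is the empty product and $H_1^{(s)}(x)=x+4^s\cos^{2s}(\pi/3)=x+4^s\cdot(1/2)^{2s}=x+1$, which disposes of the first two lines with no work. For the quadratic case, $H_2^{(s)}(x)=\bigl(x+4^s\cos^{2s}(\pi/5)\bigr)\bigl(x+4^s\cos^{2s}(2\pi/5)\bigr)$ is automatically monic with constant term $4^{2s}\cos^{2s}(\pi/5)\cos^{2s}(2\pi/5)$, and the content of the claim is the pair of identities
\begin{equation*}
4^{2s}\bigl(\cos(\tfrac{\pi}{5})\cos(\tfrac{2\pi}{5})\bigr)^{2s}=1,\qquad
4^{s}\cos^{2s}(\tfrac{\pi}{5})+4^{s}\cos^{2s}(\tfrac{2\pi}{5})=L_{2s}.
\end{equation*}
The first is immediate from the classical value $\cos(\pi/5)\cos(2\pi/5)=1/4$ (equivalently $2\cos(\pi/5)=\varphi$ and $2\cos(2\pi/5)=\varphi-1=1/\varphi$ with $\varphi$ the golden ratio), so $4^s\cos^{2s}(\pi/5)=\varphi^{2s}$ and $4^s\cos^{2s}(2\pi/5)=\varphi^{-2s}$.

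The only real step is therefore the second identity, and it reduces to the Binet-type formula for Lucas numbers. Writing $\varphi=(1+\sqrt5)/2$ and $\psi=(1-\sqrt5)/2=-1/\varphi$, one has $L_n=\varphi^n+\psi^n$ for all $n\geq0$; since $\psi^{2s}=\varphi^{-2s}$, the right-hand side $L_{2s}=\varphi^{2s}+\varphi^{-2s}$ matches the left-hand side $\varphi^{2s}+\varphi^{-2s}$ exactly. Thus $H_2^{(s)}(x)=x^2+(\varphi^{2s}+\varphi^{-2s})x+1=x^2+L_{2s}x+1$. I would present the Binet formula either as a standard fact or with a one-line induction, and justify $\cos(\pi/5)\cos(2\pi/5)=1/4$ by the telescoping product $\prod_{k=1}^{2}2\cos(k\pi/5)=\sin(4\pi/5)/\sin(\pi/5)\cdot(\pm1)$ or simply by citing the known minimal polynomial $4z^2-2z-1$ of $2\cos(2\pi/5)$.

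An alternative, perhaps cleaner, route uses the relation between $H_m^{(s)}$ and $H_m^{(1)}$: by construction $G_m^{(s)}(x)$ is the monic polynomial whose roots are the $s$-th powers of the roots of $G_m^{(1)}(x)$, so if $G_2^{(1)}(x)=(x-\rho_1)(x-\rho_2)$ then $G_2^{(s)}(x)=(x-\rho_1^s)(x-\rho_2^s)$, giving $H_2^{(s)}$ coefficients as symmetric functions of $\rho_1^s,\rho_2^s$. From~\eqref{recGm} one computes $G_2^{(1)}(x)=x^2-3x+1$, whose roots are $\rho_{1,2}=(3\pm\sqrt5)/2=\varphi^{\pm2}$; then $\rho_1^s\rho_2^s=1$ and $\rho_1^s+\rho_2^s=\varphi^{2s}+\varphi^{-2s}=L_{2s}$, and translating back through $H_2^{(s)}(x)=G_2^{(s)}(-x)$ (noting the sign flips cancel for the degree-$2$ monic polynomial with these coefficients) gives the result. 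I do not anticipate a genuine obstacle here; the one point requiring a little care is bookkeeping the signs and the $(-1)^m$ in $G_m^{(s)}(x)=(-1)^mH_m^{(s)}(-x)$ so that the constant term comes out $+1$ and the linear term $+L_{2s}$ rather than with a stray sign, but for $m=2$ these all work out because $(-1)^2=1$.
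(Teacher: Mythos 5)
Your proof is correct. The paper itself offers no argument for this fact beyond the remark ``This is easy to prove,'' so there is nothing to compare against; your direct evaluation from the defining product~\eqref{defH}, using $2\cos(\pi/5)=\varphi$, $2\cos(2\pi/5)=\varphi^{-1}$ and the Binet formula $L_{2s}=\varphi^{2s}+\varphi^{-2s}$, is exactly the kind of short verification the authors presumably had in mind, and your alternative route via the roots $\varphi^{\pm 2}$ of $G_2^{(1)}(x)=x^2-3x+1$ checks out as well (including the sign bookkeeping, since $(-1)^2=1$).
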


This is easy to prove.

\medskip
\begin{fact}\label{fact:1}
   $H_{m}^{(s)}(x)$ has only non-negative integer coefficients,
for all $m \geq 0$ and $s \geq 1$.
\end{fact}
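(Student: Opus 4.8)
The plan is to pass to $G_m^{(1)}$ and its powers, where the structure is most transparent, and to read non-negativity off a combinatorial interpretation of the relevant generating function. Recall from Proposition~\ref{prop:2} (in the form~\eqref{recGm}) that $\widetilde G_m(x):=G_m^{(1)}(x+2)$ satisfies $\widetilde G_{m+2}=x\widetilde G_{m+1}-\widetilde G_m$ with $\widetilde G_0=1$, $\widetilde G_1=x+1$; these are, up to a shift, Chebyshev-like and in fact $\widetilde G_m(x)$ has non-negative integer coefficients (it is a sum of products $U_{j}(x/2)$-type objects; concretely $\sum_m \widetilde G_m(x)t^m=(1-t)/(1-xt+t^2)$, and expanding $1/(1-xt+t^2)=\sum_m U_m(x/2)t^m=\sum_m\sum_k\binom{m-k}{k}(-1)^k x^{m-2k}t^m$ is not manifestly positive, so the shift by $2$ is the point: $1/(1-(x+2)t+t^2)$ counts Motzkin-type lattice paths and has non-negative coefficients). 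First I would establish cleanly that $\widetilde G_m(x)\in\bN[x]$, either by exhibiting the combinatorial model for $1/(1-(x+2)t+t^2)$ directly, or by an easy induction on the recurrence after checking enough initial terms to control the sign of the subtraction.

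Next I would use the product formula~\eqref{eq:Gms}: since $H_m^{(s)}(x)=(-1)^m G_m^{(s)}(-x)$ and $G_m^{(s)}(x^s)=(-1)^{m(s-1)}\prod_{j=0}^{s-1}G_m^{(1)}(\varepsilon_s^j x)$, evaluating at a formal $s$-th root reduces the claim to showing that the symmetric function $\prod_{j=0}^{s-1}G_m^{(1)}(\varepsilon_s^j x)$, re-expressed as a polynomial in $x^s$ and then in the variable $-x^s$, has non-negative coefficients. Equivalently, writing $G_m^{(1)}(x)=\prod_{k=1}^{m}\bigl(x+4\cos^2\tfrac{k\pi}{2m+1}\bigr)\cdot(-1)^m\big|_{x\to -x}$, one has directly $H_m^{(s)}(x)=\prod_{k=1}^m\bigl(x+4^s\cos^{2s}\tfrac{k\pi}{2m+1}\bigr)$, so the elementary symmetric functions $e_i$ of the numbers $4^s\cos^{2s}\theta_k$ (with $\theta_k=k\pi/(2m+1)$) are exactly the coefficients, and they are obviously non-negative \emph{reals}. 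The entire content of the Fact is therefore the \emph{integrality}, i.e. that these manifestly non-negative reals are integers; non-negativity itself is immediate from~\eqref{defH}. So I would split the proof: (i) non-negativity is trivial from the product~\eqref{defH} since each factor is positive; (ii) integrality follows because $H_m^{(s)}(x)=G_m^{(s)}(-(-x))$ up to sign and $G_m^{(s)}(x)=\operatorname{Res}_y(G_m^{(1)}(y),x-y^s)\in\bZ[x]$, the resultant of two integer polynomials (with $G_m^{(1)}\in\bZ[x]$ by~\eqref{recGm}) being an integer polynomial.

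The one genuine gap in that quick argument is that a resultant of monic integer polynomials lies in $\bZ[x]$ — that is standard — but this only gives $H_m^{(s)}\in\bZ[x]$ with non-negative \emph{real} coefficients, and a polynomial with integer coefficients and non-negative real coefficients has non-negative integer coefficients, so we are done. Thus the proof is genuinely short once assembled: combine $H_m^{(s)}(x)=\prod_{k=1}^m\bigl(x+4^s\cos^{2s}\theta_k\bigr)$ (each coefficient a non-negative real) with $H_m^{(s)}\in\bZ[x]$ (from the resultant/recurrence description), and conclude. I expect the only point needing a line of justification is the passage $G_m^{(1)}\in\bZ[x]$, which is immediate from the constant-coefficient integer recurrence~\eqref{recGm} and the integer initial data $G_0=1$, $G_1=x-1$; everything else is a one-line invocation of elementary properties of resultants and of the defining product~\eqref{defH}.
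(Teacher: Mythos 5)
Your final assembled argument --- non-negativity of the (real) coefficients read off directly from the product~\eqref{defH}, and integrality from the integer recurrence~\eqref{recGm} for $G_m^{(1)}$ together with the resultant representation $G_m^{(s)}(x)=\operatorname{Res}_y\bigl(G_m^{(1)}(y),x-y^s\bigr)$ --- is exactly the paper's (one-sentence) proof, correctly fleshed out. The opening detour through $G_m^{(1)}(x+2)$ and Motzkin-type paths is abandoned by your own argument and can simply be deleted.
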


Non-negativity is clear from~\eqref{defH}, while the integrality follows from the generating function when~$s=1$ and from the resultant representation of~$H_m^{(s)}$ for higher~$s$.
             
\smallskip
Fact~\ref{fact:1} suggests that the coefficients of the polynomials $H_{m}^{(s)}(x)$ could admit a nice combinatorial interpretation.    

For instance, the coefficient of $x^1$ in $H_m^{(s)}(x)$ is equal to the trace of the matrix $M^{2s}$, where $M=(a_{i,j})_{i,j=1}^m$ is the $m \times m$ matrix with $a_{i,j}=1$ for $i+j \leq m+1$ and $a_{i,j}=0$ otherwise.

\medskip 
\begin{fact}\label{fact:2}
   The rational function $F_s(x,t) = \sum_{m \geq 0} H_{m}^{(s)}(x) t^m$ writes $N_s(x,t)/D_s(x,t)$, where
$N_s(x,t)$ is a polynomial in $\mathbb{Q}[x,t]$ of degree $2^s-1$ in $t$ and  $B_{s-1} - 1$ in $x$,
$D_s(x,t)$ is a %an irreducible 
polynomial in $\mathbb{Q}[x,t]$ of degree $2^s$ in $t$ and  $B_{s-1}$ in~$x$, and where $B_n$ is the central binomial coefficient \[B_n = \binom{n}{\lfloor n/2 \rfloor}.\]
\end{fact}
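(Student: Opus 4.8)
\medskip
\noindent\textbf{Proof plan.} With the notation of Section~\ref{sec-general}, set $\beta_{(i_1,\dots,i_s)}(x)=\alpha_{i_1}(x)\alpha_{i_2}(\varepsilon_sx)\cdots\alpha_{i_s}(\varepsilon_s^{s-1}x)$, so that $P_s(x,t)=\prod_{\mathbf i\in\{1,2\}^s}\bigl(t-\beta_{\mathbf i}(x)\bigr)$, and recall $P_s(x,t)=D_s(-x^s,(-1)^st)\in\bQ[x^s,t]$. I would treat the statements on the $t$-degrees (via reciprocity) and those on the $x$-degrees (via a binomial identity) separately.

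\emph{Degrees in $t$.} The polynomial $P_s$ is monic of degree $2^s$ in $t$; its constant term is $\prod_{\mathbf i}\beta_{\mathbf i}(x)=\prod_{j=0}^{s-1}\bigl(\alpha_1(\varepsilon_s^jx)\alpha_2(\varepsilon_s^jx)\bigr)^{2^{s-1}}=1$, since $\alpha_1\alpha_2=1$ and each coordinate of $\mathbf i$ equals $1$ for exactly half of the $2^s$ tuples. Hence $\deg_tD_s=2^s$ and $D_s(x,0)=1$, and since $P_s$ (hence $D_s$) is self-reciprocal in $t$, also $[t^{2^s}]D_s=1$. Next I would establish the functional equation $F_s(x,1/t)=-t\,F_s(x,t)$: it holds for $s=1$ by inspection of Proposition~\ref{prop:2}, and \eqref{eq:Gms} gives $H_m^{(s)}(-x^s)=\prod_{j=0}^{s-1}H_m^{(1)}(-\varepsilon_s^jx)$, so $F_s(-x^s,t)$ is the Hadamard product in $t$ of the $F_1(-\varepsilon_s^jx,t)$; a short partial-fraction computation shows that the relation $A(1/t)=-tA(t)$ for a rational series $A=\sum_\sigma\frac{r_\sigma}{\sigma-t}$ (equivalently: the poles come in reciprocal pairs with $r_{1/\sigma}=r_\sigma/\sigma$) is preserved by the Hadamard product $\sum_{\sigma,\rho}\frac{r_\sigma s_\rho}{\sigma\rho-t}$, which propagates it to all $s$. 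Since $F_s(x,0)=H_0^{(s)}(x)=1$ we get $N_s(x,0)=D_s(x,0)=1$, and letting $t\to\infty$ in $F_s(x,1/t)=-tF_s(x,t)$—the left side tends to $1$, the right side to $-[t^{2^s-1}]N_s\big/[t^{2^s}]D_s$—gives $[t^{2^s-1}]N_s=-1$. Thus $\deg_tN_s=2^s-1$.

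\emph{Degree of $D_s$ in $x$.} As $P_s\in\bQ[x^s,t]$, $\deg_xD_s=\tfrac1s\deg_xP_s$; writing $P_s=\sum_\ell(-1)^\ell e_\ell\,t^{2^s-\ell}$ with $e_\ell$ the $\ell$-th elementary symmetric function of the $\beta_{\mathbf i}$, it remains to find $\max_\ell\deg_xe_\ell$. Since $\alpha_1(x)\sim x$ and $\alpha_2(x)\sim x^{-1}$ as $x\to\infty$, $\beta_{\mathbf i}(x)$ is a nonzero constant times $x^{d(\mathbf i)}(1+o(1))$ with $d(\mathbf i)=2\#\{k:i_k=1\}-s$, so $\prod_{\mathbf i\in S}\beta_{\mathbf i}(x)$ grows like $x^{\sum_{\mathbf i\in S}d(\mathbf i)}$ and $\deg_xe_\ell\le\max_{|S|=\ell}\sum_{\mathbf i\in S}d(\mathbf i)$, with equality when the maximizing $S$ is unique. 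Over all subsets this sum is maximal for $S^\ast=\{\mathbf i:d(\mathbf i)>0\}$, uniquely so among subsets of cardinality $|S^\ast|$, so $\deg_xe_{|S^\ast|}$ attains that value and no $\deg_xe_\ell$ exceeds it; therefore
\[
\deg_xP_s=\sum_{\mathbf i:\,d(\mathbf i)>0}d(\mathbf i)=\sum_{a>s/2}\binom sa(2a-s)=\sum_{0\le a\le s/2}\binom sa(s-2a).
\]
Using $(s-2a)\binom sa=s\bigl(\binom{s-1}{a}-\binom{s-1}{a-1}\bigr)$, the last sum telescopes to $s\binom{s-1}{\lfloor s/2\rfloor}=s\,B_{s-1}$, so $\deg_xD_s=B_{s-1}$.

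\emph{Degree of $N_s$ in $x$.} This last point is the delicate one, and I expect it to be the main obstacle; $\deg_xN_s=B_{s-1}-1$ is equivalent to $F_s(x,t)$ vanishing to order exactly $1$ at $x=\infty$. Using that $H_m^{(s)}(x)$ is monic of degree $m$ with $H_m^{(s)}(0)=1$ (the latter from $\prod_{k=1}^m\cos\tfrac{k\pi}{2m+1}=2^{-m}$), one may reorganise $F_s(x,t)=\sum_{i\ge0}x^{-i}g_i(xt)$ where $g_i(S)=\sum_{m\ge0}\bigl([x^{m-i}]H_m^{(s)}(x)\bigr)S^m$ is rational with $g_0(S)=\frac1{1-S}$; then $\deg_xD_s-\deg_xN_s=-\max_i(\delta_i-i)$ with $\delta_i$ the degree of $g_i$ at infinity, and $i=0$ already contributes $\delta_0=-1$. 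The hard part will be to show (a) $\delta_i\le i-1$ for all $i\ge1$ (i.e.\ the coefficients of $H_m^{(s)}$ along each fixed anti-diagonal grow slowly enough in $m$), and (b) that the resulting coefficient of $x^{-1}$ in $F_s$—a sum of explicit partial-fraction data over the $i$ with $\delta_i=i-1$, equal to $\tfrac{t-1}{t}$ when $s=1$—is not identically zero. Alternatively one could try to prove $\deg_xN_s\le B_{s-1}-1$ by refining the argument of the previous paragraph applied to $N_s=D_s\cdot\sum_{m<2^s}H_m^{(s)}(x)t^m\bmod t^{2^s}$, using the normalisations $H_m^{(s)}(0)=1$ to force the cancellations, and then get the matching lower bound by exhibiting a single nonzero coefficient of $x$-degree $B_{s-1}-1$. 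Either way, it is the exact degree of $N_s$, not a crude bound, that requires real work.
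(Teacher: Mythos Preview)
Your core computation for the $x$-degree of the denominator is the same as the paper's: use the asymptotics $\alpha_1(x)\sim x$, $\alpha_2(x)\sim x^{-1}$ as $x\to\infty$, bound $\deg_x P_s$ by the degree of the Laurent polynomial obtained by replacing each $\alpha_{i_k}$ by its leading term, and evaluate the resulting binomial sum $\sum_{k\le\lfloor s/2\rfloor}\binom{s}{k}(s-2k)=sB_{s-1}$. The paper simplifies first to the auxiliary polynomial $C_s(x,t)=\prod_{\mathbf i}\bigl(t-\alpha_{i_1}(x)\cdots\alpha_{i_s}(x)\bigr)$ with all arguments equal, remarking that the passage to $P_s$ is routine; you work with $P_s$ directly, which comes to the same thing.

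Where you genuinely go further is that the paper's argument is explicitly only a \emph{sketch of upper bounds}: it bounds $\deg_x D_s\le B_{s-1}$ and says nothing about equality, about the $t$-degrees beyond what Theorem~\ref{prop:3} already gives, or about $\deg_x N_s$ at all. Your uniqueness-of-the-maximizing-subset observation (that $S^\ast=\{\mathbf i:d(\mathbf i)>0\}$ is the unique subset of its cardinality achieving the maximal exponent sum, so the corresponding elementary symmetric function $e_{|S^\ast|}$ has no cancellation in its top term) upgrades the bound on $\deg_x D_s$ to an equality, and your functional-equation argument $F_s(x,1/t)=-tF_s(x,t)$, propagated through Hadamard products from the $s=1$ case, pins down the exact $t$-degrees. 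Neither of these refinements appears in the paper.

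As for $\deg_x N_s=B_{s-1}-1$: you are right that this is the delicate point, and your honesty here is well placed---the paper does not prove it either. Both your outline and the paper leave this as the genuinely open step of Fact~\ref{fact:2}.
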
                 

Here is the sketch of a proof. We only prove upper bounds, but a finer analysis could lead to the exact degrees.

First, it is enough to show that the degree in $x$ of the polynomial $P_s(x,t)$ defined in the proof of Theorem~\ref{prop:3} is at most $s B_{s-1}.$ Now, instead of considering $P_s$, we study the simpler polynomial 
\[C_s(x,t) =  \prod_{1\leq i_1,\ldots,i_s \leq 2} \Big(t - \alpha_{i_1}(x) \alpha_{i_2} (x) \cdots \alpha_{i_s}(x) \Big). \]  
We will prove the bound $s B_{s-1}$ on the degree of $C_s$ in $x$. Adapting the argument to the case of $P_s$ is not difficult.

The starting point is that, when $x$ tends to infinity, $\alpha_1(x)$ grows like $x$, while $\alpha_2(x)$ grows like $x^{-1}$. Therefore, 
\[C_s(x,t) =  \prod_{k=0}^s \Big(t - \alpha_{1}^{s-k}(x)  \alpha_{2}^{k}(x) \Big)^{s \choose k} \]      
grows like  
$L_s(x,t) = \prod_{k=0}^s \Big(t - x^{s-k}  (x^{-1})^{k} \Big)^{s \choose k}.$
The degree in~$x$ of $C_s(x,t)$ is thus bounded by the degree in $x$ of the Laurent polynomial~$L_s(x,t)$, which is equal to 
\[ \deg_x(L_s) = \sum_{k=0}^{\lfloor s/2 \rfloor} {s \choose k} (s- 2k) = {s \choose {\lfloor s/2 \rfloor + 1}} \cdot \bigl( {\lfloor s/2 \rfloor + 1} \bigr) = s B_{s-1}.\]   

%%%%%%%%%%%%%%%%%%%%%%%%%%%%%%%%%%%%%%%%%%%%%%%%%%%%%%%%%%%%%%%%%%%%%%%%%%%%%
\section*{Appendix: Maple Code}
For completeness, we give a self-contained Maple implementation that produces the rational form of the generating function~$F_s(x,t)$.   

\bigskip

\begin{verbatim}
recipoly:=proc(p,t) expand(t^degree(p,t)*subs(t=1/t,p)) end:

newtonsums:=proc(p,t,ord)
local pol, dpol;
  pol:=p/lcoeff(p,t);
  pol:=recipoly(pol,t);
  dpol:=recipoly(diff(pol,t),t);
  map(expand,series(dpol/pol,t,ord))
end:

invnewtonsums:=proc(S,t,ord) series(exp(Int((coeff(S,t,0)-S)/t,t)),t,ord) end:

Fs:=proc(s,x,t) 
local H, G, N, Q, T, ell, y, Ps, Ds, G1, Gs, m, Ns; 
  H:=(1-t)/((1-t)^2-x*t); 
  G:=subs(x=-x,t=-t,H); 
  N:=2^s; 
  Q:=subs(x=y,newtonsums(denom(G),t,N+1)); 
  T:=series(add(resultant(y^s-x,coeff(Q,t,ell),y)*t^ell,ell=0..N),t,N+1); 
  Ps:=convert(invnewtonsums(T,t,N+1),polynom);  
  Ds:=subs(t=(-1)^s*t,x=-x,Ps); # denominator
  G1:=subs(x=y,series(G,t,N));      
  Gs:=series(add(resultant(coeff(G1,t,m),x-y^s,y)*t^m,m=0..N),t,N); 
  Ns:=convert(series(Ds*subs(x=-x,t=-t,Gs),t,N),polynom); # numerator 
  collect(Ns/Ds,x,factor) 
end:

\end{verbatim}

%%%%%%%%%%%%%%%%%%%%%%%%%%%%%%%%%%%%%%%%%%%%%%%%%%%%%%%%%%%%%%%%%%%%%%%%%%%%%
\bibliographystyle{plain}

\end{document}